\journal{ }
\def\RCS$#1: #2 ${\expandafter\def\csname RCS#1\endcsname{#2}}
\newcommand{\la}{\ensuremath{\leftarrow}}
\newcommand{\contract}{\mathbin{/}}
\newcommand{\definedas}{\mathrel{\raise.095ex\hbox{:}\mkern-4.2mu=}}
\newtheorem{thm}{Theorem}%
\newtheorem{lem}[thm]{Lemma} 
\newtheorem*{claim}{Claim}
\newtheorem{corollary}{Corollary} 
\newtheorem{example}[thm]{Example}
\newtheorem{deff}{Definition}%
\theoremstyle{remark} 
\newtheorem*{rem}{Remark}
\newcommand{\st}{\textrm{s.t.\ }}
\newcommand{\dcup}{\dot\cup}
\newcommand{\R}{{\mathbb R}}
\newcommand{\Rp}{{\mathbb R}_+}
\newcommand{\Zp}{{\mathbb Z}_+}
\newcommand{\bz}{\ensuremath{\mathbf 0}\xspace}
\newcommand{\bo}{\ensuremath{\mathbf 1}\xspace}
\newcommand{\CB}{\ensuremath{\mathcal B}}
\newcommand{\CM}{\ensuremath{\mathcal M}\xspace}
\newcommand{\CMp}{{\ensuremath{{\mathcal M}'}}\xspace}
\newcommand{\CI}{\ensuremath{\mathcal I}\xspace}
\newcommand{\CIp}{{\ensuremath{{\mathcal I}'}}\xspace}
\newcommand{\CJ}{\ensuremath{\mathcal J}}
\newcommand{\CC}{\ensuremath{\mathcal C}}
\newcommand{\CO}{\ensuremath{{\mathcal O}}}
\newcommand{\COp}{{\ensuremath{{\mathcal O}'}}}
\newcommand{\COd}{\ensuremath{{\mathcal O}^\dagger}}
\DeclareMathOperator{\supp}{supp}
\newcommand{\argmin}{\operatorname{arg\,min}}
\newcommand{\argmax}{\operatorname{arg\,max}}
\newcommand{\z}{^\top}
\newenvironment{proofcl}{\begin{proof}}{\end{proof}}
\accentedsymbol{\dbarv}{\Bar{\Bar{v}}}
\newcommand{\y}{\vphantom{1}\hphantom{x}} 
\begin{document}
\title{Matroidal Approximations of Independence Systems}
\author[1]{Sven de Vries\corref{cor1}}
\ead{devries@uni-trier.de}
\author[2]{Rakesh V. Vohra\fnref{fn1}}
\ead{rvohra@seas.upenn.edu}
\cortext[cor1]{Corresponding author}
\fntext[fn2]{Supported by NSF grant AST-1343381.}
\address[1]{Universit\"at Trier, Trier 54286, Germany}
\address[2]{University of Pennsylvania,
Department of Economics,
The Ronald O. Perelman Center for Political Science and Economics,
133 South 36th Street,
Philadelphia, PA 19104, USA}

\date{\today}

\begin{abstract}
    Milgrom (2017) has proposed a heuristic for determining a maximum 
    weight basis of an independence system $\CI$ given that we want an 
    approximation guarantee only for sets in a prescribed $\CO\subseteq \CI$. This $\CO$ reflects prior knowledge of the designer about the location of the optimal basis. 
The heuristic is based on finding an `inner matroid', one contained in 
    the independence system. We show that even in the case  $\CO=\CI$
    of zero additional knowledge the worst-case performance of this new 
    heuristic can be better than that of the classical greedy
    algorithm.
\end{abstract}
\begin{keyword}
  independence system\sep greedy algorithm\sep matroid
\end{keyword}
\maketitle

\section{Introduction}
Given a ground set $E=\{1,\ldots,n\}$ and
a family $\CI$ of subsets of $E$, the pair $(E, \CI)$ is a called an
independence system if $\emptyset \in \CI$ and for all $B \subseteq A
\in \CI$ we have $B \in \CI$ as well. Elements of $\CI$ are called
independent sets. For any $A \subseteq E$, a set $B \subseteq A $ is called
a basis of $A$ if $B \in \CI$ and $B \cup \{j\} \not \in \CI$ for all
$j \in A \setminus B$. 
For $A\subseteq E$ denote by $ \CB_\CI(A)$ (or $\CB(A)$ when no ambiguity) the set
of bases  of $A$ (with respect to $\CI$). The bases with respect
to $E$ are denoted by $\CB_\CI$ or just $\CB$.
Minimally dependent sets are called circuits. 
If we associate weights 
$v_i\in \Rp$ with each $i\in E$, the problem of finding a maximum
weight basis of $(E, \CI)$ can be expressed as $\max_{A \in
  \CB}\sum_{i \in A}v_i$. For convenience we will write $\sum_{i \in S}v_i$ as $v(S)$ for all $S
\subseteq E$. 
The problem is NP-hard by reduction to Hamiltonian path in a directed
graph \cite{korte-hausmann-1978}.

An independence system $(E,\CI)$ with $\{i\}\in\CI$ for all
$i\in E$ is called \emph{normal} (for graphs or matroids, the term
\emph{loopless} is more common). A non-normal independence system can be made normal by deleting the elements $e\in E$ with
$\{e\}\notin\CI$ without
changing the solution to the problem of finding a maximum weight
basis. %

\citet{milgrom-2017} proposed a heuristic for determining a maximum
weight basis given prior information about the optimal solution. This prior information is expressed as a subset $\CO$ of  $\CI$. The collection $\CO$ is to be interpreted to be the set of likely candidates for an 
optimal basis. The performance of Milgrom's heuristic is evaluated with respect to $\CO$. It is based on finding an `inner matroid', one contained in the independence system. We compare the worst-case performance of this heuristic with the well known greedy algorithm for the same problem. We show that the worst-case
performance of Milgrom's heuristic even in the absence of additional assumptions on
$\CO$  may be better than that of the classical greedy
algorithm. Additionally, as illustrated by example when one exploits $\CO$, there are aspects of Milgrom's proposal that bear further investigation. We defer a discussion of these matters till the end. In the next section we describe the greedy algorithm and state its worst case performance.

\section{Greedy Algorithm}
We describe the greedy algorithm (see algorithm~\ref{alg:GI} next page) for finding a basis of $E$ with possibly large weight

\begin{algorithm}
\caption{Greedy Algorithm \label{alg:GI} }
\SetKwInput{KwIn}{Given}
\SetKwInput{KwOut}{Goal}
\SetAlgoNoLine\LinesNumbered
  \KwIn{An independence system $(E, \CI)$ represented by an independence
    oracle.  Weights $v_i\in \Rp$ for all $i\in
    E$. %
  }

  \KwOut{An independent set $I_g \in \CI$\ of ``large'' value.}

  Order the elements of $E$ by non-increasing value $v_1 \geq v_2 \geq
  \ldots \geq v_n\geq 0$ \label{lGI-1}\;

  Set $I \la \emptyset$\;

  \For{$i\la 1$ \KwTo $n$}{

    \lIf{ $I\cup \{i\}$ independent (oracle call)}{ $I \la I\cup
      \{i\}$\label{lGI-5}} }

  Let $I_g \la I$ and return $I_g$.
\end{algorithm}

   \begin{rem}
     If some coefficients in the objective
     function are equal, the order of the $v_i$'s is not unique and
     therefore the algorithm's outcome is not necessarily unique. To
     ensure uniqueness assume an exogenously given tie breaking rule.   \end{rem}

 A worst case bound on the quality of the greedy solution in terms of the rank quotient of an independence system can be found in \cite{jenkyns-1976, korte-hausmann-1978}. 
\begin{deff}
Let $(E,\CI)$ be an independence system. The  \emph{rank} of   $F \subseteq E$
 is defined by
$r(F) \definedas \max \{|B| \, : \, B \text{ basis of } F\}$
and the \emph{lower rank} of $F$ is defined by
$l(F) \definedas \min \{|B| \colon B \text{ basis of }
F\};$
both map $2^E$ into the nonnegative integers.
The \emph{rank quotient} of
$\CI$ is denoted by 
$ q(\CI) \definedas \min_{F\subseteq E, r(F) > 0}\, \frac{l
(F)}{r(F)}.$  
\end{deff}
Following an axiomatization of matroids by
\cite{hausmann-korte-jenkyns-1980}, we define a matroid to be  an
independence system $(E,\CI)$ with  $q(\CI) = 1$.
There are several equivalent characterizations, some of
which we will use later:
\begin{description}
\item[Basis exchange:]  For every pair of bases $B^1,B^2\in\CB$ and
  $i\in B^2\setminus B^1$ there exists a $j\in B^1\setminus B^2$ such
  that $(B^1\cup \{i\})\setminus \{j\}$ is a basis.
\item[Augmentation property:] For every pair of independent sets
  $I,J\in\CI$ with $|I|<|J|$ there exists $j\in J\setminus I$ such
  that $I\cup\{j\}$ is an independent set.
\item[Rank axioms:] see~\eqref{R1}-\eqref{R3} on page~\pageref{R1}.
\end{description}
For proofs of these and more see \cite{oxley-b1992}.

If $(E, \CI)$ is a normal independence system, then,
$q(\CI)\in\left\{\frac{a}{b}\mid a\in\{1,2,\dots,l(\CI)\} \text{ and } b\in\{1,2,\dots,r(\CI)\}\right\}.$
Hence, for normal independence systems, $q(\CI)\geq 1/r(\CI).$
However, a better bound is known:
\begin{thm}[\citealp{hausmann-korte-jenkyns-1980}]\label{thm:hkj}
  Let $(E,\CI)$ be an independence system. If, for any $A\in\CI$ and
  $e\in E$ the set $A\cup\{e\}$ contains at most $p$ circuits, then
  $q(\CI)\geq 1/p$. 
If $(E,\CI)$ is the intersection of $p$ matroids,
  then,  $q(\CI)\geq 1/p$. 
\end{thm}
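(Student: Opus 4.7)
My plan is to prove the two statements in order: first the circuit-bound statement, then the matroid-intersection statement as a consequence.

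For the first statement, I fix $F \subseteq E$ with $r(F) > 0$ and let $B, B^* \in \CB_\CI(F)$ be bases with $|B| = l(F)$ and $|B^*| = r(F)$. The target inequality $|B^*| \leq p|B|$ is implied by $|B^* \setminus B| \leq p|B \setminus B^*|$ (since $(p-1)|B \cap B^*| \geq 0$), so I would aim for the latter. For each $e \in B^* \setminus B$ I would record three structural facts: (i) $B \cup \{e\}$ is dependent, since $B$ is maximal in $F$; (ii) every circuit $C \subseteq B \cup \{e\}$ contains $e$, since $B$ is independent; and (iii) every such $C$ meets $B \setminus B^*$, for if $C \setminus \{e\} \subseteq B \cap B^*$ then $C \subseteq B^*$, contradicting the independence of $B^*$. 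Together with the hypothesis of at most $p$ circuits per pair $(B,e)$, this produces a bounded bipartite blocking structure between $B^* \setminus B$ and $B \setminus B^*$.

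The hard step, and the main obstacle, is turning these local observations into the global counting bound. My plan is a charging scheme: assign each $e \in B^* \setminus B$ to the elements of $B \setminus B^*$ appearing in its (at most $p$) circuits in $B \cup \{e\}$, and argue that each $b \in B \setminus B^*$ receives at most $p$ charges. A concrete realization would process $B^* \setminus B$ in a prescribed order, maintaining an independent set $I_j \in \CI$ obtained from $B$ by a sequence of swaps (insert $e_j$, delete a blocker in $B \setminus B^*$ drawn from a circuit of $I_{j-1} \cup \{e_j\}$), and applying the circuit hypothesis at each step to bound the number of blockers introduced. If this swap route gets tangled in maintaining independence, I would fall back on a Hall-type matching on the auxiliary bipartite graph of circuit incidences, or induct on $|B^* \setminus B|$ with a single-exchange reduction.

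For the second statement I would reduce to the first by verifying its hypothesis for $\CI = \bigcap_{i=1}^p \CM_i$. Let $A \in \CI$ and $e \in E$, and let $C \subseteq A \cup \{e\}$ be a $\CI$-circuit. Then $C$ is $\CM_i$-dependent for some $i$, so $C$ contains an $\CM_i$-circuit $C'$; the minimality of $C$ in $\CI$ forces $C' = C$. Hence every $\CI$-circuit in $A \cup \{e\}$ is an $\CM_i$-circuit for some $i$. For each fixed $i$, since $A$ is $\CM_i$-independent, the standard fundamental-circuit uniqueness fact for matroids gives at most one $\CM_i$-circuit in $A \cup \{e\}$. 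Summing over $i = 1,\ldots,p$ yields at most $p$ $\CI$-circuits, so the first statement applies and delivers $q(\CI) \geq 1/p$.
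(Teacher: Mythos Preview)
The paper does not prove this theorem; it is quoted from \citet{hausmann-korte-jenkyns-1980} and used as a black box. So there is no paper proof to compare against, and I assess your argument on its own.

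Your reduction of the second statement to the first is correct and cleanly written: every $\CI$-circuit in $A\cup\{e\}$ is an $\CM_i$-circuit for some $i$, and fundamental-circuit uniqueness in each $\CM_i$ caps the total at $p$.

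The first statement, however, has a real gap: your exchange runs in the wrong direction. In your concrete realization you start from the \emph{small} basis $B$, insert the elements $e_j\in B^*\setminus B$ one at a time, and at each step delete at most $p$ blockers from $B\setminus B^*$ (one per circuit of $I_{j-1}\cup\{e_j\}$). Carrying this through, the terminal set contains $B^*$ and hence equals $B^*$, so
\[
|B\setminus B^*| \;=\; \text{(total blockers deleted)} \;\le\; p\,|B^*\setminus B|,
\]
which is the \emph{opposite} of the inequality you need. The same reversal infects the informal charging paragraph: the hypothesis bounds the number of circuits in $B\cup\{e\}$ for each fixed $e$, not the number of $e$'s whose circuits pass through a fixed $b$, so ``each $b\in B\setminus B^*$ receives at most $p$ charges'' is exactly the unproved assertion. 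The Hall-type fallback does not rescue this either, since neither side of your bipartite incidence graph has a degree bound coming from the hypothesis.

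The fix is simply to swap the roles of $B$ and $B^*$ in the exchange. Start from the \emph{large} basis $K_0=B^*$ and process $B\setminus B^*=\{f_1,\dots,f_s\}$. At step $i$, the set $K_{i-1}\cup\{f_i\}$ has at most $p$ circuits; each contains $f_i$ and (since $B$ is independent) meets $K_{i-1}\setminus B\subseteq B^*\setminus B$. Deleting one such element per circuit and inserting $f_i$ yields $K_i\in\CI$. The final $K_s$ contains $B$ and is independent in $F$, hence equals $B$ by maximality of $B$; counting deletions now gives $|B^*\setminus B|\le p\,|B\setminus B^*|$, which is what you wanted.
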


\begin{thm}[\citealp{jenkyns-1976,korte-hausmann-1978}] \label{satz:5.18}
  For an independence system $\CI$ on
$E$ with objective function $v\gneqq \bz$, let $I_g$ be
the solution returned by Algorithm~\eqref{alg:GI} and $I_o$ be a maximum weight
basis. Then,
$q(\CI) \leq \frac{v(I_g)}{v(I_o)} \leq 1.$ 
Also, for every independence system there are weights $v \in
\{0,1\}^E$, such that the first inequality holds with equality.
\end{thm}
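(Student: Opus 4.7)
The right inequality is immediate, since $I_g$ is independent and $I_o$ maximizes weight over independent bases.

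For the left inequality, I would use Abel's summation (summation by parts). Reindex so that $v_1 \geq v_2 \geq \cdots \geq v_n \geq 0$ is the ordering used by the greedy algorithm, and set $E_i \definedas \{1,\dots,i\}$ with $E_0 \definedas \emptyset$ and $v_{n+1}\definedas 0$. The crucial structural observation is that $I_g \cap E_i$ is a basis of $E_i$: any $j \in E_i \setminus (I_g\cap E_i)$ was examined by the greedy algorithm while the current partial solution was contained in $I_g \cap E_i$, and the fact that $j$ was rejected certifies dependence of $(I_g\cap E_i)\cup\{j\}$. Hence $|I_g \cap E_i| \geq l(E_i)$. On the other hand $I_o \cap E_i$ is independent, so $|I_o \cap E_i| \leq r(E_i)$.

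Writing $v(I_g) = \sum_{i=1}^n |I_g \cap E_i|(v_i - v_{i+1})$ and $v(I_o) = \sum_{i=1}^n |I_o \cap E_i|(v_i-v_{i+1})$ and using the two bounds above together with $v_i - v_{i+1} \geq 0$ gives
\[
v(I_g) \geq \sum_{i=1}^n l(E_i)(v_i-v_{i+1}), \qquad v(I_o) \leq \sum_{i=1}^n r(E_i)(v_i-v_{i+1}).
\]
Since $l(E_i) \geq q(\CI)\,r(E_i)$ for every $i$ with $r(E_i)>0$ (and both sides vanish otherwise), each summand on the left is at least $q(\CI)$ times the corresponding summand on the right, so $v(I_g) \geq q(\CI)\,v(I_o)$.

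For the tightness claim, pick $F^* \subseteq E$ with $r(F^*)>0$ attaining $q(\CI) = l(F^*)/r(F^*)$ and define $v_i \definedas 1$ for $i \in F^*$, $v_i \definedas 0$ otherwise. Since all weights are in $\{0,1\}$, ordering considers the elements of $F^*$ first; choose the exogenous tie-breaking rule so that the greedy algorithm examines the elements of $F^*$ in an order producing a minimum-cardinality basis of $F^*$ (this is possible because every basis of $F^*$ arises as greedy output under some ordering of $F^*$). Then $v(I_g)=l(F^*)$, while any maximum-cardinality basis of $F^*$ yields $v(I_o)=r(F^*)$, giving $v(I_g)/v(I_o) = q(\CI)$.

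The main technical point to justify carefully will be the claim that $I_g\cap E_i$ is a basis of $E_i$, since this is what converts the greedy selection into the per-index bound $l(E_i)\leq |I_g\cap E_i|$; once this is in hand, Abel's summation and the definition of $q(\CI)$ do the rest mechanically. For the tightness construction, the mildly delicate point is ensuring the tie-breaking rule can force the greedy to return a minimum basis of $F^*$, which follows because every basis of $F^*$ is greedily attainable by listing its elements first among $F^*$.
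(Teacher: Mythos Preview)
The paper does not give its own proof of this theorem; it is simply cited from \citet{jenkyns-1976} and \citet{korte-hausmann-1978} as a known result, so there is nothing to compare your argument against within the paper. Your proof is the standard one: the Abel-summation identity $v(S)=\sum_i |S\cap E_i|(v_i-v_{i+1})$ together with the key observation that $I_g\cap E_i$ is a basis of $E_i$ (hence $|I_g\cap E_i|\ge l(E_i)$) and $|I_o\cap E_i|\le r(E_i)$ yields the bound directly. Your justification that $I_g\cap E_i$ is a basis of $E_i$ is correct, since rejection of $j\le i$ certifies dependence of $(I_g\cap E_{j-1})\cup\{j\}\subseteq (I_g\cap E_i)\cup\{j\}$.

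One small point worth flagging for the tightness part: the paper's Remark fixes an \emph{exogenous} tie-breaking rule in advance, whereas your construction requires choosing the tie-breaking so that greedy on $F^*$ produces a minimum-cardinality basis. The classical statement (and your argument) treats the tie-breaking as part of the adversarial input, which is the usual and intended reading; just be aware that if the tie-breaking rule were genuinely fixed before the weights are chosen, the equality claim would need a slightly different formulation. This does not affect the correctness of your proof under the standard interpretation.
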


The only attempt to improve upon this bound we are aware of involves
incorporating a partial enumeration stage into the greedy algorithm, see~\cite{hausmann-korte-jenkyns-1980}.

\section{Inner Matroid}
\citet{milgrom-2017} proposes an alternative to the greedy
algorithm with two parts. First, is the introduction of
\emph{a-priori} information on where the optimal basis may lie. Formally,  let $\CO\subseteq
  \CI$ be a collection of independent sets one of which is conjectured
  to be an optimal weight basis. $\CO$ need not satisfy the hereditary
  property (so $J\subset I\in \CO$ does not imply $J\in \CO$) and may exclude a basis that is in fact
  optimal (see Example~\ref{ex:stab}). Call $\CO$ the \emph{acceptable set}. The second part
defines a matroid `inside' the independence system (but not necessarily
containing all of $\CO$) and applies the greedy algorithm to find an optimal weight basis of that matroid. We describe this approach here.

Given an independence system $(E, \CI),$ an acceptable set $\CO$,  and a weight vector $v\in\Rp^E$ let
  \[V^*(\CI,v)=\max_{S\in\CI}v(S),\quad V^*(\CO, v) =\max_{S\in\CO}v(S),
\text{ and }V^*_g(\CI,v)=\text{value of the greedy solution}.\]
Independence system $(E,\CM)$ is a matroid by the previous definition
involving rank quotient and Theorem~\eqref{satz:5.18} if and only if 
  $V^*(\CM,v)=V_g^*(\CM,v)$ for all $v\in\Rp^E.$

Call an independence system $(E,\CJ)$  contained in
$(E,\CI)$, an inner independence system, if $\CJ\subseteq \CI$.
If $(E, \CJ)$ is a matroid it is called an inner matroid.

 The \emph{approximation quality of $\CJ$ for $\CI$ with respect to the acceptable set $\CO \subseteq \CI$} is given by
\[\rho(\CI,\CO,\CJ)\definedas 
\min_{S \in \CO\setminus\{\emptyset\}}\min_{S'\in \CB_\CJ(S)}
  \frac{|S'|}{|S|};\]
so this picks the element from $\CO$ that is least well approximated by a
basis of $\CJ$. In terms of the lower rank $l_\CJ$  of $(E,\CJ)$ this yields
\[\rho(\CI,\CO,\CJ)=
\min_{S \in \CO\setminus\{\emptyset\}}
  \frac{l_\CJ(S)}{|S|}.
\]
 If $\CM$ is an inner matroid, then $r_\CM=l_\CM$ and
\[\rho(\CI,\CO,\CM) = \min_{S \in
  \CO\setminus\{\emptyset\}}
  \frac{r_{\CM}(S)}{|S|}
=\min_{S \in
  \CO\setminus\{\emptyset\}}\max_{S'\in\CM:S'\subseteq
  S}\frac{|S'|}{|S|},\]
which is the form \cite{milgrom-2017} originally chose. 

It is natural to ask when $\rho(\CI,\CO,\CM)=1$.

\begin{thm}\label{rho1}
If $(E, \CI)$ is a normal independence system  with acceptable set $\CO\subseteq\CI$, then, there exists a matroid $\CM$ with
  $\rho(\CI,\CO,\CM)=1$ if and only if \CI\ has an inner matroid
  containing $\COd\definedas
  \bigcup_{F\subseteq B\in\CO}\{F\}$.  
\end{thm}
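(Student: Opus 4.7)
The plan is to unravel the definitions: $\rho(\CI,\CO,\CM)=1$ is equivalent to $\CO\subseteq\CM$, and then invoke the hereditary property of matroids to pass between $\CO$ and its downward closure $\COd$.

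The key observation is that for any independence system $(E,\CJ)$ and any nonempty $S\subseteq E$, the quantity $l_\CJ(S)$ is the minimum size of a maximal $\CJ$-independent subset of $S$, so $l_\CJ(S)\leq |S|$ always. If $S\in\CJ$, then by the hereditary property the only maximal $\CJ$-independent subset of $S$ is $S$ itself, so $l_\CJ(S)=|S|$; conversely, any subset of $S$ of size $|S|$ equals $S$, so $l_\CJ(S)=|S|$ forces $S\in\CJ$. Hence $l_\CJ(S)/|S|=1$ iff $S\in\CJ$, and applying this with $\CJ=\CM$ shows that $\rho(\CI,\CO,\CM)=1$ if and only if every nonempty $S\in\CO$ lies in $\CM$, i.e.\ $\CO\subseteq\CM$ (the empty set being in every independence system).

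Given this characterization, both directions are immediate. For the forward direction, an inner matroid $\CM$ with $\rho(\CI,\CO,\CM)=1$ contains $\CO$; since $\CM$ is itself a (hereditary) independence system, every subset $F$ of any $B\in\CO$ is also in $\CM$, so $\COd\subseteq\CM\subseteq\CI$, exhibiting $\CM$ as an inner matroid containing $\COd$. For the converse, an inner matroid $\CM$ containing $\COd$ automatically contains $\CO$ (each $B\in\CO$ lies in $\COd$ by taking $F=B$), and the characterization above yields $\rho(\CI,\CO,\CM)=1$.

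There is essentially no obstacle beyond correctly pinning down the equivalence $l_\CJ(S)=|S|\iff S\in\CJ$; the remainder is bookkeeping using the downward-closedness of matroids. The normality hypothesis on $\CI$ does not play a role in the argument, though it fits the setup of the paper (it does ensure that any singleton of $\CO$ is already in $\CI$, which is consistent with $\COd\subseteq\CI$ being a necessary condition for an inner matroid containing $\COd$ to exist).
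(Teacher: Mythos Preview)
Your proof is correct and follows essentially the same approach as the paper: establish that $\rho(\CI,\CO,\CM)=1$ is equivalent to $\CO\subseteq\CM$, then use the hereditary property of $\CM$ to pass to $\COd$. Your write-up is in fact more complete than the paper's, which only sketches the forward direction and leaves the (immediate) converse implicit; your explicit justification of the equivalence $l_\CM(S)=|S|\iff S\in\CM$ and your remark on the unused normality hypothesis are both accurate.
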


\begin{proof}
 If $\rho(\CI,\CO,\CM)=1$ then $\CO\subseteq \CM$. For $I\subseteq J\in
 \CO$ follows $J\in \CM$ and by hereditary property of matroid $I\in\CM.$
\end{proof}

We
give three examples to suggest that nothing stronger is possible. 
\begin{deff}
Let $U^k_n$ denote the \emph{uniform matroid} of all subsets of an $n$-element
    set of cardinality at most $k$, where  $0\leq k\leq n$. If
    we want to specify the ground set $E$ of $n$ elements explicitly
    we can also write $U^k_E.$
\end{deff}

\begin{example}
In all cases the ground set is $E = \{1,2,3,4\}$.
  \begin{itemize}
  \item Let $\CI = \{\emptyset,\{1\},\{2\},\{3\},\{4\}, \{1,2\},\{3,4\}\}$. $(E,
    \CI)$ is not a matroid as the basis exchange axiom is violated by
    $\{1,2\}$ and $\{3,4\}$. If
    $\CO=\{\{1\},\{2\},\{3\},\{4\}\}$, then, $\rho(\CI,\CO,U^1_4)=1$.
    This demonstrates, that the condition ``$\CO$ is a matroid'' is
    too strong. However, in this case \COd\ \emph{is} a matroid.
  \item %
    Let $\CI$ be as above, but    $\CO=\{\{1,2\},\{3,4\}\}$. Then, $\rho(\CI,\CO,U^1_4)=1/2$.
    In this case \COd\ is not a matroid.
\nopagebreak
  \item Let $\CI = \{\emptyset\{1\},\{2\},\{3\},\{4\},\{1,2\}, \{2,3\}, \{3,4\}, \{1,3\}, \{1,4\}, \{2, 4\}, \{1,2,3\}\}.$
    $(E, \CI)$ is not a matroid, since $\{3,4\}$ is a independent set
    that can not be augmented with an element from the larger
    independent set $\{1,2,3\}$. If $\CO=\{\{2,3\},\{3,4\}\}$ the smallest inner
    matroid of $\CI$ containing $\CO$ is
    $\CM=U^2_{\{2,3,4\}}$;  and 
    $\rho(\CI,\CO,\CM)=1.$ However $\COd$ is not a matroid, since it
    lacks the basis $\{2,4\}$ that would be required by basis exchange.
    So requiring ``$\COd$ to be a matroid'' is too strong.
  \end{itemize}
\end{example}

\begin{rem} Clearly
  $\rho(\CI,\CO,\CJ)\geq\rho(\CI,\COd,\CJ),$
  since the minimum
  on the right hand side is determined over a larger set.
\end{rem}

However, $\rho(\CI,\CO,\CJ)>\rho(\CI,\COd,\CJ)$ is possible, as the
following example demonstrates:
\begin{example}
  Consider $E=\{1,2,3,4,5,6\}$ and the independence system
  $\CI=2^{\{1,2,3,4\}}\cup 2^{\{3,4,5,6\}}.$ Clearly $\{1\},
\{5,6\} \in \CI$. The augmentation property 
  requires that  $\{1,5\}$ or $\{1,6\}$  has to be independent, which
  is not the case. Therefore \CI\ is not a matroid.
  Let $\CO = \{\{1,2,3,4\}, \{3,4,5,6\}\}$ be our acceptable set and let
  $\CM=2^{\{1,2,3,4\}}$.
  Now, let us determine $\rho(\CI,\CO,\CM) = \min_{S \in
  \CO\setminus\{\emptyset\}}
  \frac{r_{\CM}(S)}{|S|}.$  With $r_\CM(\{1,2,3,4\})=4$ and $r_\CM(\{3,4,5,6\})=2$ %
  we deduce that $\rho(\CI,\CO,\CM)=\frac24=\frac12.$
  
  Now, $\COd$ contains $O=\{4,5,6\}$ with
  $r_\CM(O)=1$. Even worse $O=\{6\}\in\COd$ with $r_\CM(O)=0$ which
  demonstrates $\rho(\CI,\COd,\CM)= 0<\frac12=\rho(\CI,\CO,\CM)$.
\end{example}
\begin{corollary}
  The example demonstrates, that %
  $\rho(\CI,\COd,\CM)>0$ is only possible if 
  $e\in\bigcup_{O\in\CO}O$ implies $\{e\}\in \CM.$
\end{corollary}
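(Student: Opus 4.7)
The plan is to argue by contrapositive: assume some element $e \in \bigcup_{O\in\CO}O$ satisfies $\{e\} \notin \CM$, and deduce that $\rho(\CI,\COd,\CM) = 0$. The key observation is that $\COd$ is, by construction, the downward closure of $\CO$, so singletons of elements that appear in any $O \in \CO$ automatically belong to $\COd$ and therefore participate in the minimum defining $\rho$.

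Concretely, I would fix such an element $e$, choose $O \in \CO$ with $e \in O$, and note that $\{e\} \subseteq O$ yields $\{e\} \in \COd$. Since $\{e\}$ is nonempty, the ratio $l_\CM(\{e\})/|\{e\}|$ is one of the terms over which $\rho(\CI,\COd,\CM) = \min_{S\in\COd\setminus\{\emptyset\}} l_\CM(S)/|S|$ is minimized.

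Next I would evaluate this term. The only subsets of $\{e\}$ are $\emptyset$ and $\{e\}$ itself; if $\{e\} \notin \CM$, then the unique basis of $\{e\}$ with respect to $\CM$ is $\emptyset$, giving $l_\CM(\{e\}) = 0$. Hence $\rho(\CI,\COd,\CM) \leq 0$, and since the ratio is always nonnegative we obtain equality, contradicting the hypothesis $\rho(\CI,\COd,\CM) > 0$. The contrapositive is precisely the statement of the corollary.

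I do not foresee any real obstacle. The argument reduces to two elementary observations: $\COd$ is downward closed by definition, and the lower rank of a singleton is either $0$ or $1$ depending solely on whether the singleton is itself independent. The preceding example already exhibits the failure mode $\{e\} \notin \CM \Rightarrow \rho(\CI,\COd,\CM)=0$ (witnessed there by $e=6$), so the general proof amounts to formalizing exactly what that example illustrates.
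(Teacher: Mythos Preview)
Your argument is correct and is precisely the formalization of what the paper leaves implicit: the corollary is stated without separate proof, relying on the preceding example (with $e=6$) to make the point evident. Your contrapositive---picking $e\in O\in\CO$, noting $\{e\}\in\COd$, and observing $l_\CM(\{e\})=r_\CM(\{e\})=0$ when $\{e\}\notin\CM$---is exactly the mechanism the example exhibits, so there is no difference in approach.
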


\begin{example}[twin-peaks]\label{ex:twin-peaks}
  Consider two disjoint sets $E_1$ and $E_2$ and integers $k_1,k_2$
  such that $k_1<|E_1|<k_2<|E_2|.$ Let $\CM_1=U^{k_1}_{E_1}$ and $\CM_2=U^{k_2}_{E_2}$.

  Define $(E, \CI)$ to be an
  independence system on $E\definedas E_1\dcup E_2$ where a set $A
  \subseteq E$ is independent if 
  and only if $A$ is an independent set in $\CM_1$ \emph{or} $\CM_2$, thus
  $\CI=\CM_1\cup\CM_2$. Call this a `twin-peaks' independence system. %
  Choose an $F\subseteq E$ which consists of one element from $E_1$
  and $k_2$ elements from $E_2$. Now $l(F)=1$ and $r(F)=k_2$. Since larger
  ratios between sets in this independence system are impossible, $q(\CI)=\frac1{k_2}$.

  To determine the best matroid approximating $\CO=\CI$ contained in \CI, we must examine all matroids $\CM'$
  contained in $\CI:$
  \begin{enumerate}
  \item Suppose first that there exists $i\in E_1, j\in E_2$ such that $\{i\}, \{j\}\in \CMp$. We show that the rank of $\CMp$ is 1, i.e. the largest bases of $\CMp$ are of size one. If not, there exists a set $F \in \CMp$ of cardinality two. By construction of $\CI$
    the set $F$ must be a subset of $E_1$ or $E_2$, wlog. let
    $F\subseteq E_1$; since $F$ and $\{j\}$ are independent sets in
    $\CMp$ there has to be an element in $F$ with which we could
    augment $\{j\}$ to be an independent set $F'$ in $\CMp$. But $F'$
    contains elements from $E_1$ and $E_2$ and therefore is not
    independent in $\CI$ and therefore not in $\CMp.$ 

    The largest matroid of this kind contained in \CI\ is $U^1_E$.
Hence, every independent set from \CI\ is approximated by an
    arbitrary contained singleton and we obtain the worst case bound of
    $1/\max(k_1,k_2)=1/k_2.$

  \item On the other hand, if $r(\CMp)>1$ then we can conclude
    that either all independent sets of \CMp\ are contained in $E_1$ or
    they are contained in $E_2$  and the inclusion-wise largest
    matroids contained in $E_1$ and $E_2$ are $\CM_1$ and $\CM_2,$ respectively.

    For $\CMp=\CM_1$, approximating sets $F\in\CM_2\setminus\{\emptyset\}$
    is only possible with the 
    empty set which yields a quotient of $0/|F|$. For
    $\CMp=\CM_2$, approximating sets
    approximating nonempty sets $F\in\CM_1\setminus\{\emptyset\}$
    is only possible with the 
    empty set which yields a quotient of $0/|F|.$
  \end{enumerate}
In summary we obtain the best approximation in the first case and can
conclude
\[\rho(\CI,\CI,\CMp)=\frac{1}{k_2}=q(\CI).\]

\textbf{However,} $\rho$ can be much better if we exploit additional
knowledge.  As an illustration, suppose that $\CO= \binom{E_2}{k_2}$,
i.e., all subsets of size at most $k_2$ of $E_2$.
This might be justified by a situation, where we know
that the optimal solutions which occur in our environment, have to
live in $E_2$, maybe since $k_1\ll k_2$.
Unsurprisingly,  the matroid $\CM_2$ gives us the best conceivable
guarantee of $1$:
\[\rho\left(\CM,\binom{E_2}{k_2},\CM_2\right
)=1>\frac{1}{k_2}=q(\CI).\]
\end{example}

\citet{milgrom-2017} proposes that the maximum weight basis of an
inner matroid $\CM$ be used as a solution to the problem of finding a
maximum weight basis in $(E, \CI)$. %
The objective function value of this solution will be
$V^*(\CJ,v)$. \cite{milgrom-2017}  gives a bound on the quality of
this solution in terms of $\rho(\CI,\CO,\CJ)$. 

\begin{deff}
  For a given independence system $(E,\CI)$ and $e\in E$ define a
  new independence system $\CI\setminus\{e\}$  on
  $E\setminus \{e\}$ by
  $\CI\setminus\{e\}\definedas \{I\in\CI\colon e\notin I\}.$
  Additionally, for $\{e\}\in\CI$ define the independence system
  $\CI\contract\{e\}$ on $E\setminus \{e\}$ by
  $\CI/\{e\}\definedas\{I\setminus\{e\}\colon e\in I\in\CI\}.$
\end{deff}

\begin{rem}
  It is easy to see, that $(E\setminus\{e\},\CI\setminus\{e\})$ is an
  independence system and $\CI\contract\{e\}$ is an independence
  system if $\{e\}\in\CI$. Furthermore, if $(E,\CI)$ is a matroid,
  so is $(E\setminus\{e\},\CI\setminus\{e\})$.
\end{rem}

\begin{lem}
  Given an independence system $(E,\CI)$ and a set $\CO$ the set 
  \[W\definedas \{ w\in \R^E_+\colon \bo\z w=1\} \cap\{w\in\R^E_+\colon \emptyset\neq\CO\cap
  \argmax_{I\in \CI} w(I)\}\]
  is compact.
\end{lem}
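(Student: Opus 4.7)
My plan is to show that $W$ is closed and bounded in the finite-dimensional space $\R^E$ and then invoke Heine--Borel. Boundedness is immediate because $W$ sits inside the standard simplex $\Delta\definedas\{w\in\R^E_+\colon \bo\z w=1\}$, which is obviously bounded. So the real task is to establish that $W$ is closed, and the crucial (but simple) observation driving everything is that $\CI$, and hence $\CO\subseteq\CI$, is a \emph{finite} collection of subsets of the finite ground set $E$.

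With finiteness in hand, I would rewrite the second set appearing in the defining intersection as
\[
\{w\in\R^E_+\colon \CO\cap\argmax_{I\in\CI}w(I)\neq\emptyset\}
=\bigcup_{S\in\CO}\bigcap_{I\in\CI}\{w\in\R^E_+\colon w(S)\geq w(I)\},
\]
recasting ``some member of $\CO$ is a maximizer'' as a finite disjunction over candidates $S\in\CO$, each of which requires (via a finite conjunction) that $S$ achieve weight at least as large as every other member of $\CI$. Each inequality $w(S)\geq w(I)$ defines a closed half-space in $\R^E$, and non-negativity $w\in\R^E_+$ is also closed. Thus each inner intersection is closed, the outer union (being finite) is closed, and the whole set is closed.

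Intersecting this closed set with the closed simplex $\Delta$ yields $W$ as a closed and bounded subset of a finite-dimensional Euclidean space, so $W$ is compact. I do not anticipate a genuine obstacle here: the one step that deserves care is the algebraic reformulation of the $\argmax$ condition as a finite Boolean combination of linear inequalities, and this works precisely because $\CI$ (and therefore $\CO$) is finite. If $\CI$ were allowed to be an infinite family, the $\argmax$ set could fail to be non-empty and closedness would no longer be automatic; but under the standing assumption that $E$ is finite, the argument is clean.
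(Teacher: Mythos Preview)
Your argument is correct and follows the same route as the paper: the paper notes that the simplex $\{w\in\R^E_+:\bo\z w=1\}$ is compact and rewrites the second set as $\bigcup_{O\in\CO}\{w\in\R^E_+: O\in\argmax_{I\in\CI}w(I)\}$, observing that each piece is a closed cone and the union is finite, hence closed. Your version simply unpacks one layer further by writing each cone explicitly as the finite intersection $\bigcap_{I\in\CI}\{w\in\R^E_+: w(O)\ge w(I)\}$ of closed half-spaces, which is exactly why those cones are closed.
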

\begin{proof}
  Clearly the set $\{ w\in \R^E_+\colon \bo\z w=1\}$ is compact. Also,
  \begin{align*}
    \{w\in\R^E_+\colon \emptyset\neq\CO\cap  \argmax_{I\in \CI} w(I)\}
    &=\bigcup_{O\in\CO} \{w\in\Rp^E\colon O\in \argmax_{I\in \CI} w(I)\},
  \end{align*}
  where each set on the right is a closed cone. As $\CO$ is finite the
  right hand side is closed. Therefore $W$ is compact.
\end{proof}

\begin{thm}[\citealp{milgrom-2017}]\label{miltheorem} 
Let $(E,\CM)$ be an inner matroid and $W$ be the above defined set
of non-negative and non-trivial weight vectors $v$ such that at
least one member of $\CO$ is a maximum weight basis of $(E, \CM)$. %
Then,
  \[\rho(\CI,\CO,\CM) =\min_{v\in W} \frac{V^*(\CM,v)}{V^*(\CO,v)}.\]

\end{thm}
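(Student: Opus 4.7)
The plan is to prove the equality by establishing both inequalities separately. Both directions exploit the explicit form $\rho(\CI,\CO,\CM) = \min_{S \in \CO \setminus \{\emptyset\}} r_\CM(S)/|S|$ that is available because $\CM$ is a matroid.

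For the direction $\min_{v\in W} V^*(\CM,v)/V^*(\CO,v) \leq \rho(\CI,\CO,\CM)$, I would exhibit explicit test vectors. Given any nonempty $S \in \CO$, set $v_S \definedas \mathbf{1}_S/|S|$. Then $\bo\z v_S = 1$, and for every $I \in \CI$ one has $v_S(I) = |I \cap S|/|S| \leq 1 = v_S(S)$, so $S$ itself lies in $\CO \cap \argmax_{I \in \CI} v_S(I)$, placing $v_S$ in $W$. A direct computation gives $V^*(\CO,v_S) = v_S(S) = 1$ and $V^*(\CM,v_S) = r_\CM(S)/|S|$, since any maximum-weight $\CM$-independent set with respect to $v_S$ can be taken to be a $\CM$-basis of $S$ (elements outside $S$ carry zero weight). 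The ratio thus equals $r_\CM(S)/|S|$, and minimising over $S \in \CO \setminus \{\emptyset\}$ yields the desired bound.

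For the other direction $\min_{v\in W} V^*(\CM,v)/V^*(\CO,v) \geq \rho(\CI,\CO,\CM)$, fix $v \in W$ and choose $O^* \in \CO$ with $v(O^*) = V^*(\CO,v) = V^*(\CI,v)$; the goal is $V^*(\CM,v) \geq \rho\cdot v(O^*)$. I would use the level-set decomposition: for $t \geq 0$, put $T_t \definedas \{e \in E : v_e \geq t\}$, so by the standard greedy identity for matroids $V^*(\CM,v) = \int_0^\infty r_\CM(T_t)\,dt$, whereas $v(O^*) = \int_0^\infty |O^* \cap T_t|\,dt$. The target inequality follows if pointwise in $t$ one has $r_\CM(T_t) \geq \rho\cdot |O^* \cap T_t|$. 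Monotonicity of matroid rank gives $r_\CM(T_t) \geq r_\CM(O^* \cap T_t)$, and applying the definition of $\rho$ to the set $O^* \cap T_t$ would close the argument.

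The main obstacle is precisely this last step: $O^* \cap T_t$ need not itself belong to $\CO$, so the bound $r_\CM(O^* \cap T_t)/|O^* \cap T_t| \geq \rho$ is not immediate from the definition of $\rho$. I expect the complete argument either to pass implicitly to $\COd$ (justified by the earlier observation $\rho(\CI,\CO,\CJ) \geq \rho(\CI,\COd,\CJ)$), or to exploit the strong constraint $v \in W$---that $O^*$ is weight-maximal in all of $\CI$---in order to reduce the pointwise claim to the definition of $\rho$ evaluated directly on members of $\CO$.
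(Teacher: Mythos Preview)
Your first inequality (the direction $\min_{v\in W} V^*(\CM,v)/V^*(\CO,v)\le\rho(\CI,\CO,\CM)$ via the indicator vectors $v_S=\mathbf 1_S/|S|$) is correct and clean; the paper does not write this direction out, treating it as the easy half.

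The second direction, however, has a genuine gap, and neither of your two proposed repairs works. Your level-set argument would need $r_\CM(O^*\cap T_t)\ge\rho\,|O^*\cap T_t|$ pointwise in $t$, but this can simply fail: nothing prevents $\CM$ from having small rank on the high-weight part of $O^*$ while compensating on the low-weight part so that $r_\CM(O^*)/|O^*|$ is still $\ge\rho$. Passing to $\COd$ does not help, because the inequality $\rho(\CI,\CO,\CM)\ge\rho(\CI,\COd,\CM)$ goes the wrong way: the level-set argument applied to $\COd$ would only yield $V^*(\CM,v)\ge\rho(\CI,\COd,\CM)\,V^*(\CO,v)$, which is weaker than required. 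And while you are right that the constraint $v\in W$ must be exploited, the level-set decomposition is not the mechanism by which it enters.

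The paper's approach to this direction is entirely different. It argues by contradiction and extremality: assume a counterexample $v^*$ exists with $V^*(\CM,v^*)/V^*(\CO,v^*)<\rho$, take one with $|E|$ minimal, then with minimal support, then with the fewest distinct coordinate values. A sequence of perturbation arguments (decreasing a coordinate in $X_\CM$, increasing one in $X_\CO\setminus X_\CM$, shifting the top value level) forces $X_\CM\subseteq X_\CO=E$ and eventually drives $v^*$ to be a constant multiple of the all-ones vector on $E=X_\CO\in\CO$. At that point the ratio equals $r_\CM(X_\CO)/|X_\CO|\ge\rho$, contradicting the assumption. So the hard direction is proved not by integrating a pointwise inequality but by showing that the minimising weight vector is itself (a scaling of) the indicator of a member of~$\CO$.
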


As a convenience for the reader we provide here an independent, fuller
proof of the result than \cite{milgrom-2017}.
\begin{proof}%
If the theorem is false, there must exist an independence system $(E,
\CI)$, a set $\CO$ and an inner matroid $(E, \CM)$ such that
\begin{equation}
\rho(\CI,\CO,\CM) > \min_{v\in W} \frac{V^*(\CM,v)}{V^*(\CO,v)}.\label{eq:viol}
\end{equation}

Among all such counterexamples choose one where $|E|$ is minimal.
Among all such counterexamples choose one, where the support of
$v$ is smallest. Denote the size of the support by $\ell$.

Let
\begin{equation}
 v^*\in \argmin_{v \in W:\supp(v)\leq\ell}
  \frac{V^*(\CM,v)}{V^*(\CO,v)},\label{eq:v*}
\end{equation} 
such that it is a vector with as few distinct values as possible. 
Choose $X_\CO\in\argmax_{S\in\CO}v^*(S),$ and
$ X_{\CM}\in\argmax_{S\in\CM}v^*(S)$ such that $|X_\CO\cap X_\CM|$ is maximal.
In case of ties, choose a tie-breaking rule, so that greedy yields $X_\CM$.
By the minimum support assumption, $v^*_j = 0$ for all $j \not \in X_{\CM} \cup X_{\CO}$.
Additionally:

\begin{claim}
  $v^*_j>0$ for all $j\in E\setminus X_\CO.$
\end{claim}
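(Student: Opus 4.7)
The plan is to argue by the minimality of $|E|$ among counterexamples. Assume for contradiction that some $j \in E \setminus X_\CO$ has $v^*_j = 0$; I will construct a strictly smaller counterexample to Theorem~\ref{miltheorem} by deleting $j$ from the ground set, which contradicts the choice of $|E|$ as minimal.

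First, I would form the deleted instance $E' \definedas E \setminus \{j\}$, $\CI' \definedas \CI \setminus \{j\}$, $\CM' \definedas \CM \setminus \{j\}$ (still a matroid by the remark preceding the lemma), and $\CO' \definedas \{O \in \CO : j \notin O\}$. Since $X_\CO \in \CO$ and $j \notin X_\CO$, $\CO'$ is nonempty, and $v^*|_{E'}$ is a nonnegative vector whose entries still sum to $1$ (because $v^*_j=0$), so it lies in the analogue $W'$ of $W$ for the restricted instance, using that $X_\CO \in \CO' \cap \argmax_{I \in \CI'} v^*|_{E'}(I)$.

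Second, the key identities --- both leveraging $v^*_j = 0$ together with the hereditary property of $\CM$ --- are
\begin{align*}
V^*(\CM', v^*|_{E'}) &= V^*(\CM, v^*), \\
V^*(\CO', v^*|_{E'}) &= V^*(\CO, v^*).
\end{align*}
For the first, any $M \in \CM$ can be replaced by $M \setminus \{j\} \in \CM'$ without changing its weight. For the second, $X_\CO \in \CO'$ already attains the maximum, and $\CO' \subseteq \CO$ prevents it from being exceeded. Third, I would verify $\rho(\CI', \CO', \CM') \geq \rho(\CI, \CO, \CM)$: for any $S \in \CO' \subseteq \CO$, $S \subseteq E'$, so every $\CM$-basis of $S$ automatically lies in $\CM'$; hence $l_{\CM'}(S) = l_\CM(S)$, and the minimum on the left is taken over a subcollection of the quotients defining $\rho(\CI, \CO, \CM)$.

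Chaining these three observations with the assumed strict inequality $\rho(\CI, \CO, \CM) > V^*(\CM, v^*)/V^*(\CO, v^*)$ produces a counterexample on $E'$ with $|E'| < |E|$, contradicting minimality. I expect the main subtlety is not a mathematical obstacle but rather the bookkeeping: making sure that after deletion the tuple $(E', \CI', \CO', \CM', v^*|_{E'})$ still satisfies the hypotheses of the theorem setup --- in particular that $X_\CO$ remains a $\CI'$-maximizer among elements of $\CO'$, and that lower ranks in $\CM'$ agree with those in $\CM$ on subsets of $E'$ --- both of which fall out cleanly from $v^*_j = 0$ and $j \notin X_\CO$.
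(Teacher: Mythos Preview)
Your proposal is correct and follows essentially the same approach as the paper: delete the zero-weight element $j\notin X_\CO$, check that $(\CI',\CO',\CM')$ with the restricted weight vector is still a counterexample (using $\rho(\CI',\CO',\CM')\ge\rho(\CI,\CO,\CM)$ and that the value quotient does not increase), and invoke minimality of $|E|$. If anything, you are slightly more careful than the paper in explicitly verifying $v^*|_{E'}\in W'$ and in noting the equality $V^*(\CM',v^*|_{E'})=V^*(\CM,v^*)$, where the paper records only the (sufficient) inequality $V^*(\CM,v^*)\ge V^*(\CM',v^*)$.
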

\begin{proofcl}
  Suppose there exists $j\in E\setminus X_\CO$ with $v^*_j=0.$
  Let $\CIp=\CI\setminus\{j\}$, $E'=E\setminus\{j\}$,
  $\CMp=\CM\setminus\{j\}$ and
  $\COp= \{O\in\CO\colon j\notin O \}.$
  Because $j\notin X_\CO\in \COp$ we have $\COp\neq\emptyset$.

  For our counterexample we have by~\eqref{eq:viol} and~\eqref{eq:v*}:
  \[\min_{S\in\CO\setminus\{\emptyset\}}\frac{r_\CM(S)}{|S|}%
  >
  \frac{V^*(\CM,v^*)}{V^*(\CO,v^*)}.\]
  Now, since the minimum on the left below is taken over a smaller set and
  $r_\CMp(S)=r_\CM(S)$ for all $S\in\COp$,
  \[\min_{S\in\COp\setminus\{\emptyset\}}\frac{r_\CMp(S)}{|S|}\geq
  \min_{S\in\CO\setminus\{\emptyset\}}\frac{r_\CM(S)}{|S|},\]

  On the other hand, $V^*(\CM,v^*)\geq V^*(\CMp,v^*)$ and
$V^*(\CO,v^*)=V^*(\COp,v^*)>0$ yield
    \[
    \frac{V^*(\CM,v^*)}{V^*(\CO,v^*)}\geq \frac{V^*(\CM',v^*)}{V^*(\CO',v^*)}.\]
    Together
  \[\min_{S\in\COp\setminus\{\emptyset\}}\frac{r_\CMp(S)}{|S|}=\rho(\CIp,\COp,\CMp)>
  \frac{V^*(\CMp,v^*)}{V^*(\COp,v^*)}.\]
  Therefore, we obtain a counterexample with $|E'|<|E|$ in
  contradiction to our assumption of minimality of the counterexample.
\end{proofcl}

\begin{claim}
 $v^*_k=0\,\, \forall k\in X_\CM\setminus X_\CO$.
\end{claim}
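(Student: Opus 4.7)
The plan is to argue by contradiction through a perturbation that yields a counterexample in $W$ with strictly smaller support, violating the minimality of $\ell$. Specifically, suppose $v^*_k>0$ for some $k\in X_\CM\setminus X_\CO$, and define $\tilde v\in\Rp^E$ by $\tilde v_k=0$ and $\tilde v_j=v^*_j$ for $j\neq k$; let $\hat v\definedas \tilde v/(\bo\z \tilde v)$ be the rescaling to the unit simplex. One must first check $\tilde v\neq\bz$: otherwise all the mass of $v^*$ sits at $k$, and since $k\notin X_\CO$ this forces $V^*(\CO,v^*)=0$, contradicting well-definedness of the ratio in~\eqref{eq:viol}.

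The central step is to verify that $\hat v\in W$. Because $v^*\in W$ we have $X_\CO\in\argmax_{I\in\CI}v^*(I)$, and the hypothesis $k\notin X_\CO$ gives $\tilde v(X_\CO)=v^*(X_\CO)$. The componentwise inequality $\tilde v\leq v^*$ then yields, for every $I\in\CI$,
\[\tilde v(I)\leq v^*(I)\leq v^*(X_\CO)=\tilde v(X_\CO),\]
so $X_\CO\in\CO\cap\argmax_{I\in\CI}\tilde v(I)$; this property survives positive rescaling, so $\hat v\in W$.

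To close the argument I would compare ratios. The denominator is preserved, $V^*(\CO,\tilde v)=\tilde v(X_\CO)=V^*(\CO,v^*)$, by the same display applied to $\CO$ in place of $\CI$, while $V^*(\CM,\tilde v)\leq V^*(\CM,v^*)$ because $\tilde v\leq v^*$ pointwise. Since $V^*(\CM,\cdot)/V^*(\CO,\cdot)$ is scale-invariant, the ratio at $\hat v$ is at most the ratio at $v^*$, and therefore still strictly below $\rho(\CI,\CO,\CM)$ by~\eqref{eq:viol}. Hence $\hat v$ is a counterexample with $|\supp(\hat v)|=\ell-1$, contradicting the minimality of $\ell$. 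The only real subtlety is preserving the defining condition of $W$ under the perturbation, and this is exactly where $k\notin X_\CO$ is essential: zeroing a coordinate that lies in $X_\CO$ could destroy the optimality witness.
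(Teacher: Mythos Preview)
Your proof is correct and takes a genuinely different route from the paper. The paper decreases $v^*_k$ only \emph{infinitesimally}, arguing that for small $\alpha>0$ the set $X_\CM$ stays optimal in $\CM$ (this is where the paper invokes the basis-exchange axiom and the assumption that $|X_\CM\cap X_\CO|$ is maximal), so that $V^*(\CM,v^\alpha)=v^*(X_\CM)-\alpha$ strictly decreases while $V^*(\CO,v^\alpha)$ stays fixed; the resulting \emph{strict} decrease of the ratio then contradicts that $v^*$ is an argmin. You instead jump all the way to $\alpha=v^*_k$, zeroing the coordinate and cutting the support. Because you only need the weak inequality $V^*(\CM,\tilde v)\le V^*(\CM,v^*)$ (immediate from $\tilde v\le v^*$) together with $V^*(\CO,\tilde v)=V^*(\CO,v^*)$, the ratio at $\hat v$ is at most the ratio at $v^*$, which is already $<\rho(\CI,\CO,\CM)$; the contradiction is then with the minimality of $\ell$ rather than with the argmin property. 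Your argument is cleaner: it needs neither the matroid exchange axiom nor the tie-breaking choice ``$|X_\CM\cap X_\CO|$ maximal,'' both of which the paper's infinitesimal approach relies on. One small point worth making explicit: the line ``Because $v^*\in W$ we have $X_\CO\in\argmax_{I\in\CI}v^*(I)$'' deserves a one-line justification (membership in $W$ only gives \emph{some} $O\in\CO$ that is $\CI$-optimal, but then $v^*(X_\CO)\ge v^*(O)\ge v^*(X_\CO)$ forces $X_\CO$ to be $\CI$-optimal too).
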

\begin{proofcl}
  Suppose a $k\in X_\CM\setminus X_\CO$ with $v^*_k>0$ exists. Let
  $e^k$ denote the $k$-th unit vector.
  Consider the valuations $v^\alpha=v^*-\alpha e^k$ for
  $0\leq\alpha< v^*_k$ and their
  normalization
  \[\hat v^\alpha = \frac{1}{\sum_{i\in E} v^\alpha_i} v^\alpha \quad\text{
    and observe }\quad \frac{\hat v^\alpha (S)}{\hat v^\alpha(T)}=\frac{v^\alpha
  (S)}{v^\alpha(T)}\; \forall S,T:v(T)>0.\]

Now  $\hat v^\alpha$ differs from $v^*$ in two ways. First,  the
weight of one element outside $X_\CO$  was reduced by
$\alpha$. Second, the weights were scaled.  Hence, $X_\CO$ is optimal
with respect to $v^\alpha$ for all $\alpha\geq 0$.

If there exists no $\epsilon>0$ such that $X_\CM$ remains optimal for $\CM$ with respect
to $v^\alpha$ for $0\leq \alpha\leq\epsilon$, then it is the case that
$X_\CM$ is not optimal for any $v^\epsilon$ with $\epsilon>0.$ This requires, that the
run of the greedy algorithm changes if $\epsilon>0$. However, the
greedy algorithm does not depend on the exact weights of $v^\epsilon$
but only on their ordering with respect to decreasing weight.
The only change in ordering is, that with respect to $v^\epsilon$ for
$\epsilon>0$ that item $k$ will occur later. So before the greedy
algorithm for $v^\epsilon$ considers $k$ it picks up some item $k'$
that is not part of $X_\CM.$ Now, as $k'$ was not choosen for $v^0$,
item $k'$ must be dependent with respect to the previously choosen items and
$k$.  Hence there exists a circuit in $X_\CM\cup k'$ that contains $k$
and $k'.$ Now, as this happens for arbitrarily small $\epsilon$ it has
to be the case, that $v_k=v_{k'}$ (and all elements choosen inbetween
have that same weight). But this yields, that $X_\CM\cup k'$ is
dependent while $X_\CM\setminus \{k\}\cup\{k'\}$ is another basis and
$k'\in X_\CO\setminus X_\CM$ which contradicts our
assumption, that $X_\CM$ and $X_\CO$ were chosen to have a maximal
intersection.

  Therefore, such an $\epsilon>0$ has to exist so that $X_\CM$ stays
  optimal for $v^\alpha$ with $0<\alpha<\epsilon$.
  Now, observe $v^\alpha(X_\CM)$ and $\hat v^\alpha(X_\CM)/\hat v^\alpha(X_\CO)$
  are decreasing for $\alpha$ increasing 
  from $0$ to $\epsilon$. As $|E|$ and $|\supp v|$ do not change this contradicts our choice of $v^*$ in \eqref{eq:v*}.
\end{proofcl}
Together with the previous claim we obtain:
\begin{claim}
  $X_\CM\setminus X_\CO=\emptyset$, therefore $X_\CM\subseteq X_\CO$
  and $X_\CO=E$.
\end{claim}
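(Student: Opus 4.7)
The plan is to derive this claim directly by combining the two previous claims with the minimum-support observation noted early in the proof, namely that $v^*_j = 0$ for all $j \notin X_\CM \cup X_\CO$.

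First I would show $X_\CM \setminus X_\CO = \emptyset$ by a direct contradiction. Suppose $k \in X_\CM \setminus X_\CO$. Then $k \notin X_\CO$, so the previous claim gives $v^*_k > 0$. But $k \in X_\CM \setminus X_\CO$, so the other previous claim gives $v^*_k = 0$. These are incompatible, so no such $k$ exists, and $X_\CM \subseteq X_\CO$ follows.

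Next I would deduce $X_\CO = E$. Take any $j \in E \setminus X_\CO$. The first previous claim says $v^*_j > 0$, so in particular $j$ lies in the support of $v^*$. By the minimum-support observation, the support is contained in $X_\CM \cup X_\CO$, so $j \in X_\CM$ (as $j \notin X_\CO$). But then $j \in X_\CM \setminus X_\CO$, which we have just shown to be empty. Hence $E \setminus X_\CO = \emptyset$, i.e., $X_\CO = E$.

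There is no real obstacle here; the claim is essentially a bookkeeping consequence of the two preceding claims together with the pruning of coordinates outside $X_\CM \cup X_\CO$. The only thing to watch is that the minimum-support observation was recorded before the claims and is genuinely available at this point; aside from that, the argument is two short contradictions in sequence.
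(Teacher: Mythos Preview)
Your proposal is correct and matches the paper's approach exactly: the paper simply writes ``Together with the previous claim we obtain'' and states the conclusion, leaving precisely the two short contradictions you spell out (combining Claims~1 and~2 with the earlier observation that $v^*_j=0$ for $j\notin X_\CM\cup X_\CO$) to the reader.
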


\begin{claim}
  For every $i\in X_\CM$ there exists an $j\in X_\CO\setminus X_\CM$
  such that $v^*_i=v^*_j$ and vice-versa.
\end{claim}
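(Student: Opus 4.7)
The strategy is to prove both halves of the claim by a common template. A one-parameter perturbation of $v^*$ forces, via its ratio-minimality, the existence of a maximum-weight basis of $\CM$ in the right exchange position relative to the chosen element; the symmetric exchange property for weighted matroid bases (if $B_1, B_2$ are maximum-weight bases of $\CM$ and $e \in B_1 \setminus B_2$, there exists $f \in B_2 \setminus B_1$ with $v^*_e = v^*_f$ and $B_1 \setminus \{e\}\cup\{f\}$ again a maximum-weight basis) then extracts the matching weight. Write $a := v^*(X_\CM) = V^*(\CM, v^*)$; the counterexample hypothesis and $\rho(\CI,\CO,\CM)\le 1$ give $a<1$, and $X_\CO=E$ gives $V^*(\CO, v^*)=1$.

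\emph{First direction ($i\in X_\CM$).} Let $\gamma := a - \max\{v^*(S): S\in\CM,\, i\notin S\}\ge 0$. For $\alpha\in(0,\min(v^*_i,\gamma))$, the perturbation $v^\alpha := v^* - \alpha e^i$ keeps $v^\alpha\ge\bz$, keeps $X_\CM$ optimal in $\CM$, and keeps $E\in\argmax_{I\in\CI} v^\alpha(I)$ (automatic for $v^\alpha\ge\bz$), so $\hat v^\alpha := v^\alpha/(1-\alpha) \in W$; a direct calculation gives normalized ratio $(a-\alpha)/(1-\alpha)<a$, contradicting the minimality of $v^*$. Thus $\min(v^*_i,\gamma)=0$. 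When $v^*_i>0$ one has $\gamma=0$: there is $S\in\CM$ with $v^*(S)=a$ and $i\notin S$, and $S$ extends to a maximum-weight basis $S'$ avoiding $i$ (if every such extension contained $i$ then $i$ would be a coloop, forcing $v^*_i \le 0$). Symmetric exchange applied to $(X_\CM,S')$ then yields $j\in S'\setminus X_\CM\subseteq X_\CO\setminus X_\CM$ with $v^*_j=v^*_i$. When $v^*_i=0$ and $i$ is not a coloop of $\CM$, any basis $B$ of $\CM$ with $i\notin B$ gives, via basis exchange, $j\in B\setminus X_\CM$ such that $X_\CM\setminus\{i\}\cup\{j\}$ is a basis of weight $a+v^*_j\le a$, forcing $v^*_j=0=v^*_i$.

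\emph{Second direction ($j\in X_\CO\setminus X_\CM$).} Set $\mu:=\max\{v^*(S):S\in\CM,\,j\in S\}$ and perturb $v^\alpha:=v^*+\alpha e^j$ for $\alpha\ge 0$. The normalized ratio becomes $\max(a,\mu+\alpha)/(1+\alpha)$, with minimum $a/(1+a-\mu)$ attained at $\alpha=a-\mu$; since $a<1$, this is strictly below $a$ unless $\mu=a$. Minimality of $v^*$ therefore produces $S\in\CM$ with $j\in S$ and $v^*(S)=a$, which extends to a maximum-weight basis of $\CM$ containing $j$; symmetric exchange yields $i\in X_\CM\setminus S$ with $v^*_i=v^*_j$.

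\emph{Main obstacle.} The genuinely delicate subcase is the one omitted above: $v^*_i=0$ and $i$ a coloop of $\CM$ in the first direction. Here no basis of $\CM$ avoids $i$ and basis exchange is unavailable, so one has to delete $i$ and reduce to the smaller tuple $(E\setminus\{i\},\CI\setminus\{i\},\CO',\CM\setminus\{i\})$ with $\CO':=\{O\setminus\{i\}:O\in\CO\}$, where the restricted weight $v^{*\prime}$ still sums to $1$ and realizes the same ratio $a$. Showing that this is again a counterexample---in particular that $\rho(\CI\setminus\{i\},\CO',\CM\setminus\{i\})\ge\rho(\CI,\CO,\CM)$---uses in an essential way that $i$ is a coloop of $\CM$ and that $v^*_i=0$, and contradicts the minimality of $|E|$.
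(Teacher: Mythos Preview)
Your main line of argument---perturb $v^*$ in one coordinate, use ratio-minimality to force a second optimal basis of $\CM$, then invoke (symmetric) basis exchange to extract an element of matching weight---is exactly the paper's approach, only spelled out with considerably more care. The paper compresses all of this into ``$X_\CM$ has to change for arbitrarily small changes to $v^*_i$, and it can change only by dropping $i$ and picking up some new element $j\in X_\CO\setminus X_\CM$''; your version with $\gamma$, $\mu$ and the explicit symmetric-exchange lemma is a faithful expansion of that sentence.

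The subcase you isolate as the ``Main obstacle'' ($i\in X_\CM$ a coloop of $\CM$ with $v^*_i=0$) is a genuine issue that the paper simply ignores: it writes ``slightly decrease $v^*_i$'' without checking positivity. However, your proposed repair is incorrect. With $\CO'=\{O\setminus\{i\}:O\in\CO\}$ and $i$ a coloop of $\CM$, one has $r_{\CM\setminus\{i\}}(O\setminus\{i\})=r_\CM(O)-1$ for every $O\in\CO$ containing $i$, so the contribution $r_\CM(O)/|O|$ is replaced by $(r_\CM(O)-1)/(|O|-1)$. Since $r_\CM(O)\le |O|$, this is \emph{at most} the old ratio, strictly smaller whenever $O\notin\CM$. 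Hence
\[
\rho(\CI\setminus\{i\},\CO',\CM\setminus\{i\})\ \le\ \rho(\CI,\CO,\CM),
\]
the reverse of the inequality you assert. Concretely, take $E=\{1,2,3\}$, $\CM=U^1_{\{1\}}\oplus U^1_{\{2,3\}}$, $\CO=\{\{1,2,3\}\}$: here $\rho=2/3$, but deleting the coloop $1$ yields $\rho'=1/2$. So your reduction need not produce a smaller counterexample, and this subcase remains unresolved in your write-up (as it tacitly is in the paper's own proof).
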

\begin{proofcl}
  Consider a valuation where $v^*_i$ is slightly decreased. If $X_\CO$
  and $X_\CM$ do not change, then a smaller quotient~\eqref{eq:viol} for that
  valuation would result, which is a contradiction. As $X_\CO=E$ that cannot change either; so $X_\CM$ has to change for arbitrarily small
  changes to $v^*_i$, and it can change only, by dropping $i$ and
  picking up some new element $j\in X_\CO\setminus X_\CM.$ But this
  requires $v^*_i=v^*_j$.

  Similarly, consider a valuation, where $v^*_j$ is slightly increased. If $X_\CO$
  and $X_\CM$ do not change, then a smaller quotient for that
  valuation would result, which contradicts assumption. Again,
  $X_\CO=E$ ensures that $X_\CO$ remains optimal during these slight increases. However, it
  could be, that $j$ immediately enters $X_\CM$ but this requires, that
  just as immediately some $i'\in X_\CM$ exits $X_\CM$. But this
  requires $v^*_{i'}=v^*_j$.
\end{proofcl}
Let $\bar v=\max_{i\in E} v^*_i$, and
 $\dbarv\definedas \max_{i:v^*_i\neq \bar v}
v^*_i, $ %
 and set
\[\hat v_j^{\alpha}\definedas
\begin{cases}
  \bar v + \alpha&\text{ if } v^*_j=\bar v,\\
  v^*_j&\text{ otherwise.}
\end{cases}\]
If $\dbarv=0$ then  $v^*$ is a multiple of the all ones vector, and the
proof would be done, as such a $v^*$ could not provide a counterexample.

Otherwise, if $\dbarv>0$  for $|\alpha|\leq \bar v-\dbarv$,  $X_\CO$ and $X_\CM$ remain optimal for $V^*(\CO,\hat v_j^\alpha)$
and $V^*(\CM,\hat v_j^\alpha)$ respectively, since $X_\CO=E$ and
$v\geq 0$ implies, that $X_\CO$ is best possible.
Regarding the optimality of $X_\CM$, notice, that the greedy algorithm 
finds an optimal solution, but for $|\alpha|\leq \bar v-\dbarv$ the
ordering of the elements does not change, hence $X_\CM$ remains
optimal.
Let 
\[f(\alpha) \definedas\frac{V^*(\CM,\hat v_j^\alpha)}{V^*(\CO,\hat
  v_j^\alpha)}.\]
By our analysis we deduce 
\[f(\alpha) \definedas\frac{V^*(\CM,\hat v_j^*)+\alpha|\{i\in
  X_\CM\colon \bar v=v^*_i\}|}{V^*(\CO,\hat
  v_j^\alpha)+\alpha|\{i\in
  X_\CO\colon \bar v=v^*_i\}|}
.\] 
 If $f'(0) < 0$, then $f(\epsilon)<f(0)=
\frac{V^*(\CM,v^*)}{V^*(\CO,v^*)}$ (for small $\epsilon$) contrary to
the assumption, that 
$v^*$ was chosen best possible among those with minimal $|E|$ and
support of size $\ell$. For small $\epsilon$ neither $E$ nor
$\supp(v^\epsilon)$ have changed.

If $f'(0) > 0$, then $f(-\epsilon)<f(0)=
\frac{V^*(\CM,v^*)}{V^*(\CO,v^*)}$ (for small $\epsilon$) contrary to
the assumption, that 
$v^*$ was chosen best possible among those with minimal $|E|$ and
support of size $\ell$. For small $\epsilon$ neither $E$ nor
$\supp(v^{-\epsilon})$ have changed.

Finally, if $f'(0) = 0$,  it is straightforward to show that $f$ is a constant function for
$\alpha\in[\dbarv -\bar v,\bar v -\dbarv]$,
hence $f(\dbarv -\bar v)=f(0)$ but
the number of distinct values of $\hat v^{\bar{\bar v} -\bar v}$ has
decreased by $1$ contrary to the assumption that $v^*$ was chosen
to have the minimum  number of distinct values.

Since in all three cases we got a contradiction, we conclude that
$v^*$ is just a multiple of the all-ones vector.
\end{proof}

\begin{rem}
  Uniqueness of  $X_\CO$ and $X_\CM$ is important for the proof.
  Consider a twin-peaks independence system consisting 
  of $\{1\}$ and of $\{2,3\}$. The bases are $\{1\}$ and $\{2,3\}$.
  Let $2,1,1$ be the coefficients of the objective function.
  Optimal are both $X_1=\{1\}$ and $X_2=\{2,3\}.$

  If we choose $X_2$ for the optimal solution and $\alpha$ increases
  $v_1$ by $0$ upwards then the objective function is $\alpha |X_1|$ which
  does not relate to $X_2$.

    If we choose $X_1$ for the optimal solution and $\alpha$ decreases
  $v-1$ by $0$ downwards, then, the objective function is constantly $|X_2|$ which
  does not relate to $X_1$.
\end{rem}

Theorem~\ref{miltheorem} about approximating an independence system by
a matroid motivates the search for a matroid that yields
a best approximation. To this end, \citet{milgrom-2017} defines the following \emph{substitutability index}:
\begin{equation}\label{rho}
\rho^M(\CI,\CO)\definedas\max_{\CM \subseteq \CI: \CM \text{ is a
      matroid}}\rho(\CI,\CO,\CM).
  \end{equation}

\citet{milgrom-2017} gives no algorithm for determining the best inner
matroid $\CM'$. We describe an integer program %
for finding it.

Let $r$ be the rank function of the independence system $(E, \CI)$. We use the fact that matroids are completely characterized by their rank functions. A function $r:
  2^E\to \Zp$ is the rank function of matroid  $(E,\CI)$,
    with
    $\CI=\{F\subseteq E\colon r(F)=|F|\}$ if and only if for all
 $X,Y\subseteq E$:
\begin{align}
    \tag{R1}\label{R1} & r(X)\leq |X|\\
    \tag{R2}\label{R2} & \text{If } X\subseteq Y, \text{ then }
    r(X)\leq r(Y)\\
    \tag{R3}\label{R3} & r(X\cup Y)+r(X\cap Y)\leq r(X)+r(Y)%
  \end{align}

We formulate the problem of finding the inner matroid as the problem of finding a suitable rank function, $r'(S) = r(S) - \delta_S$, with $\delta_S$, integral,  must be chosen to ensure that $r'$ is a matroid rank function. Hence,
\begin{align}
\begin{alignedat}{2}
  \rho^M(\CI, \CO) = \max_{\delta} \min_{S \in \CO} (1- \frac{\delta_{S}}{r(S)})&\\
\st\quad 0 \leq \delta_{S} &\leq r(S)&& \forall S \subseteq E\\
[r(S)-\delta_{S}]  &\leq [r(T)-\delta_{T}]&&\forall S
\subset T\subseteq E\\
[r(S \cup j \cup k) - \delta_{S \cup j \cup k}] - [r(S \cup j ) - \delta_{S \cup j}]&\leq [r(S \cup k ) - \delta_{S \cup k}] - [r(S) - \delta_{S}]\quad&&\forall j, k \not \in S \subseteq E\\
\delta_S &\text{ integral }&& \forall S \subseteq E\\
\end{alignedat}\tag{IP}\label{IP}
\end{align}

Obviously, this is not  an efficient means to find  an optimal  inner
matroid, since it is an integer-program %
Additionally, the usually given independence
oracle is of no use either.
However, if one  faces  %
a single independence system with possibly different objective
functions but always, for reasons of fundamental value, the same
acceptable set $\CO$, then finding a good inner matroid %
is just preprocessing.

\section{Possible advantages over  Greedy}
In this section we compare the outcome of the algorithm in
\citet{milgrom-2017} with greedy in the zero knowledge case of $\CO =
\CI$. Example~\ref{ex:twin-peaks} investigated a case, in which under
zero knowledge the usual rank-quotient bound could not be
improved. This does not exclude the possibility that there exist  independence systems $(E, \CI)$ where $\rho^M(\CI, \CI) > q(\CI)$. 

\begin{thm}\label{thm:smart_zero_knowledge}
  There exists an independence system $(E,\CI)$ such that $\rho^M(\CI, \CI) > q(\CI)$.
\end{thm}
\begin{proof}
  Let $E=\{-2,-1,1,2,3,4\}$ and 
  $\CI=\left\{ I\subseteq E\colon \text{ either } I=\{-2,-1\} \text{ or } |I\cap \{-2,-1\}|\leq 1\right\}.$

  On the one hand the set $\{-1,1,2,3,4\}$ is independent and maximally
  independent, hence a basis of $E$. On the other hand the set
  $\{-2,-1\}$ is independent and cannot be augmented; hence it is a
  basis of size $2$. Notice, that all two element sets are
  independent in \CI. Therefore we can 
  conclude $l(E)=2$ and $r(E)=5$ and $q(\CI)\leq 2/5$. 

  By the earlier remark we made  for normal independence systems
  $q(\CI)\in\left\{\frac{a}{b}\mid a\in\{1,2,\dots,l(\CI)\}\right.$
  and \linebreak $\left.b\in\{1,2,\linebreak\dots,r(\CI)\}\right\}$.
  So if the rank quotient were smaller than $2/5$ for some set $F$ we
  would need $r(F)>2$. Hence $|F|\geq 3$. Recall that all pairs of elements are independent, hence the
  lower rank of any set of size at least $2$ is at least
  $2$. Therefore, $q(\CI)=2/5.$
  
  Now consider the almost free matroid \CM on set $E$ given by 
  $\CM=\left\{ I\subseteq E\colon |I\cap \{-2,-1\}|\leq 1\right\}.$
  As it it contains one `OR'-clause less than $\CI$, clearly,
  $\CM\subseteq \CI.$ 
  Additionally we note that $\CM=U^1_{\{-2,-1\}}\oplus
  U^4_{\{1,2,3,4\}}$, hence \CM is a matroid, in fact an inner
  matroid of \CI.

  Further, by construction $\CM=\CI \setminus \{\{-2,-1\}\}$, i.e., all sets of
  $\CI$ but $\{-2,-1\}$ are independent in $\CM$ and therefore
  perfectly approximated. Now, clearly the best approximations of the
  set $\{-2,-1\}$ from \CI are $\{1\}$ or $\{2\}$ both of
  approximation quality $1/2$. Therefore $\rho(\CI,\CI,\CM)=1/2$ and
  $\rho^M(\CI,\CI)\geq \rho(\CI,\CI,\CM)=1/2.$\footnote{To argue that
    $\rho^M(\CI,\CI)=1/2$ we prove, that there exists no matroid
    \CMp with $\rho(\CI,\CI,\CMp)>1/2$. For a contradiction suppose,
    inner matroid $\CMp\subset\CI$ fulfills $\rho(\CI,\CI,\CMp)>1/2$. As it has to
    approximate the set $L=\{-2,-1\}$ better than $1/2$, we can
    conclude that $L\in\CMp$. By ratio assumption and
    $I=\{-1,1,2,3,4\}\in \CI$  we know that \CMp contains a set of $3$
    elements of $I$. Wlog. we can assume that that set is either of
    form $J_1=\{-1,1,2\}$ or $J_2=\{2,3,4\}.$ Now applying the
    augmentation property of matroids to $L$ of size $2$ and $J_1$ or
    $J_2$ of size $3$ shows that $\CMp$ contains an independent set
    $M$ of
    size $3$ containing $-2,-1$ and one element of $1,2,3,4$. But by
    definition, such $B$ is not contained in $\CI$ and therefore note
    in $\CMp.$} This demonstrates 
  \[\rho^M(\CI, \CI) \geq 1/2> 2/5=q(\CI).\]
Hence, a well chosen inner matroid (following Milgrom) can provide
  a better performance guarantee than direct application of the greedy algorithm
  to an independence system.
\end{proof}

This independence system may appear contrived.
This is not the case. It is the independence system of a
knapsack problem with capacity $8$ and  a weight of $4$ for items $-2,-1$
and weight of $1$ for items $1,2,3,4$. 

There is a classic approximation
algorithm for the knapsack problem. It generates two solutions and picks the best of them. One of those solution is a greedy solution (according to value per weight), and  has a performance guarantee of
$1/2$. Theorem \ref{thm:smart_zero_knowledge} is no stronger than the
classic approximation algorithm for knapsack. Therefore, we provide a knapsack instance
and an inner matroid of its independence system with an approximation
quality of better than $1/2$.

\begin{thm}\label{thm:knapsack}
  For the independence system $(\CI,E)$ induced by the knapsack
  problem on items $E=\{-3,-2,-1, \linebreak 1,2,3,4,5\}$
  with weights $w_i=5$ if $i<0$ and $w_i=1$ if $i>0$ and capacity
  $15$ holds  $\rho^M(\CI, \CI) = 2/3>1/2>3/7=
  q(\CI)$. Therefore, there exists a matroid that permits
  $2/3$-approximation and is hence better than the
  knapsack-greedy algorithm. 
\end{thm}

\begin{proof}
  Consider the knapsack problem on items $E=\{-3,-2,-1, 1,2,3,4,5\}$
  with weights $w_i=5$ if $i<0$ and $w_i=1$ if $i>0$ and capacity
  $15$.
  It induces the independence system 
  \[\CI=\left\{ I\subseteq E\colon \text{ either } I=\{-3,-2,-1\} \text{ or } |I\cap \{-3,-2,-1\}|\leq 2\right\}.\]

  On the one hand the set $\{-2,-1,1,2,3,4,5\}$ is independent and maximally
  independent, hence a basis of $E$. On the other hand the set
  $\{-3,-2,-1\}$ is independent and cannot be augmented; hence it is a
  basis of size $3$. Notice, that any triple of elements is
  independent in \CI. Therefore we can 
  conclude $l(E)=3$ and $r(E)=7$ and $q(\CI)\leq 3/7$. 

  By earlier remark we had for normal independence systems  $q(\CI)\in\left\{\frac{a}{b}\mid
    a\in\{1,2,\dots,l(\CI)\}\right.$ and
    $\left.b\in\{1,2,\dots,r(\CI)\}\right\}$; together with the observation
  that all triplets are independent in this $\CI$  it
  reduces to $q(\CI)\in\left\{\frac{a}{b}\mid
    a\in\{1,2,\dots,3\}\right.$ and
    $\left.b\in\{1,2,\dots,7\}\right\}.$
  If there were $F\subseteq E$ with $q(F)=l(F)/r(F)=a/b<3/7$ then because of
  $b\leq 7$ we need $a\leq 2$. But as triplets are independent,
  $b=l(F)\leq 2$ requires $|F|\leq 2$ which yields $a=r(F)\leq 2$. But
  this rules out $q(F)<3/7.$ Therefore 
  \[q(F)=3/7.\]

  Now consider the almost free matroid \CM on set $E$ given by 
  \[\CM=\left\{ I\subseteq E\colon |I\cap \{-3,-2,-1\}|\leq 2\right\}.\]
  As it it contains all independent sets of $\CI$ except for $\{-3,-2,-1\}$, clearly
  $\CM\subseteq \CI.$ 
  Additionally we note that $\CM=U^2_{\{-3,-2,-1\}}\oplus
  U^5_{\{1,2,3,4,5\}}$, hence \CM is a matroid, in fact an inner
  matroid of \CI.

  Further, by construction $\CM=\CI-\{\{-3,-2,-1\}\}$. Therefore, all sets of
  $\CI$ but $\{-3,-2,-1\}$ are independent in $\CM$ and therefore
  \emph{perfectly} approximated. Now, clearly the best approximations of the
  set $\{-3,-2,-1\}$ from \CI are all of its two-element subsets $\{1,2\},\{2,3\},\{1,3\}$  of
  approximation quality $2/3$. Therefore $\rho(\CI,\CI,\CM)=2/3$ and
  $\rho^M(\CI,\CI)\geq
  \rho(\CI,\CI,\CM)=2/3.$

  Finally, we want to argue that $\rho^M(\CI,\CI)=2/3$. Otherwise,
  there would exist a matroid $\CMp\subseteq \CI$ with
  $\rho(\CI,\CI,\CMp)>2/3$. Since $\{-3,-2,-1\}\in\CI$  and the
  approximation-ratio is assumed to be greater than $2/3$ we can conclude that
  $\{-3,-2,-1\}\in\CMp.$ Similarly,  $\{1,2,3,4,5\}\in\CI$ implies
  that a subset of it of size greater $5*2/3$, has to be contained in
  $\CMp$. Without loss of generality suppose $\{1,2,3,4\}\in\CMp.$ As
  $\CMp$ is a matroid and contains independent sets $\{-3,-2,-1\}$ and
  $\{1,2,3,4\}$ of different size, there has to exist an element, say
  $1$, from the second set, so that
  $\{-3,-2,-1,1\}\in\CMp$. However, because of
  $\{-3,-2,-1,1\}\notin\CI$ this is impossible.
  This demonstrates 
  \[\rho^M(\CI, \CI) = 2/3.\]

  Therefore, when running the plain greedy algorithm on the
  constructed matroid $\CM$ for arbitrary objective functions, we
  obtain a performance guarantee of $2/3$ which is strictly better
  than the $1/2$-guarantee of the usual knapsack-greedy algorithm.
\end{proof}

The previous two examples, in particular the one from
Theorem~\ref{thm:smart_zero_knowledge}, raise the question of how to
\emph{easily} find a good inner matroid. We have already described an integer program~\eqref{IP} to do so, which is clearly impractical. Therefore, we
outline a heuristic that could,  in some cases,  find a good (but not best) inner matroid, by further restricting the
circuits of the independence system.

\begin{rem}
 Consider again  $E=\{-2,-1,1,2,3,4\}$ and 
  $\CI=\left\{ I\subseteq E\colon \text{ either } I=\{-2,-1\} \text{
      or } |I\cap \{-2,-1\}|\leq 1\right\}$ from the proof of Theorem~\ref{thm:smart_zero_knowledge}.
  The circuits of $\CI$ are
  $\CC=\{\{-2,-1,i\}\colon i\in \{1,2,3,4\}\}.$
  If \CI were a matroid, then, strong circuit elimination  should
  hold.
  I.e., strong circuit elimination for $\{-2,-1,1\}$ and $\{-2,-1,2\}$ would
  require that there exits circuits in $\{-2,-1,1,2\}$
  \begin{inparaitem}
  \item containing $1$ and avoiding $-2$,
  \item containing $2$ and avoiding $-2$,
  \item containing $1$ and avoiding $-1$, and
  \item containing $2$ and avoiding $-1$.
  \end{inparaitem}
  In comparison with $\CC$ we see, that no circuit in
  $\{-2,-1,1,2\}$ contains $1$ and avoids $-2$ and so on.
To get ``closer'' to an inner matroid, we could
  either drop one of the problematic circuits, say $\{-2,-1,1\}$ but
  this would enlarge the independence system, the `opposite' to an
  \emph{inner matroid}, or we could
  \begin{compactdesc}
  \item[add missing circuits:] for example above we could add circuits
    $\{1,-1\},\{2,-1\},\{1,-2\},\{2,-2\}$; this way we would get
    closer to an inner matroid, while some of the original circuits we started
    with, like $\{-2,-1,1\}$ would remain dependent, but would no
    longer be minimal dependent sets, as they contain some of the newly minted
    circuits, like $\{1,-2\}$. If we would add all circuits that could
    be derived in this fashion, we would, indeed end with an inner
    matroid, that turns out to be $U^1_E.$ However, it provides only
    a Milgrom approximation guarantee of $1/5$ worse than the
    rank-quotient of $1/2$. Or we could
  \item[``destroy'' the offending circuits, by adding circuits
    contained in them:] an obvious way to make the offending circuits
    $\{-2,-1,1\}$ and $\{-2,-1,2\}$ go away is to add a circuit
    $\{-2,-1\}$. In fact all previous circuits contain this one, so
    that the resulting circuit system would be $\CC'=\{\{-2,-1\}\}$. Indeed, the
    independence system induced by this single circuit is just the
    inner matroid $\CM=U^1_{\{-2,-1\}}\oplus
    U^4_{\{1,2,3,4\}}$ from the proof of
    Theorem~\ref{thm:smart_zero_knowledge} realizing the approximation
    guarantee of $1/2$.
  \end{compactdesc}
\end{rem}

\begin{rem}
  One might wonder, what else can be said about the independence
  system $\CI$ of Theorem~\ref{thm:smart_zero_knowledge}. As it turn
  out $\CI\contract \{-2\}$ is  one of the previously defined
  twin-peaks independence systems with ground sets of size $1$ and $4$.
\end{rem}

The zero knowledge assumption is weaker than necessary. If $v\geq
\bz$, we know that a maximum weight independent set must be, unsurprisingly, a basis of $E$. %
Thus, a more reasonable knowledge assumption is $\CO=\CB_\CI$. %
This provides another possibility to improve over simple greedy for
independence systems.

Suppose $v\geq \bz$ and $\CO \subset \CB_\CI$. In this case, prior
information rules out some basis from being optimal. To ensure an
apples with apples comparison between the greedy algorithm and the
algorithm in \citet{milgrom-2017} we must allow the greedy algorithm
to make use of the information contained in $\CO$ as well. This may
require modifying the greedy algorithm so much that it become an
entirely different algorithm altogether.  A straightforward way to do
this is to apply the greedy algorithm to an independence system
$(E,\COd)$ of sets contained in elements of $\CO$. The example we
describe below shows that the greedy algorithm will be dominated by
Milgrom's heuristic. Under one interpretation of $\CO$ this
is not surprising. Suppose, that $\CI \setminus \CO$ is the collection
of sets that can never be optimal. Then, $\COd$ would include sets
that should definitely not be selected.

\begin{example}\label{ex:stab}
Consider the stable set problem (largest set of pairwise nonadjacent
vertices) on a path of $9$ vertices:
\vspace*{-5mm}\begin{center}
  \begin{tikzpicture}[scale=1.2,style=thin]%
    \path[clip](-1,0.5) rectangle (9,1.5);

    \tikzstyle{every node}=[font={\small}]; \foreach \x/\y/\i in {
      0.5/1/1, 1.5/1/2, 2.5/1/3, 3.5/1/4, 4.5/1/5, 5.5/1/6, 6.5/1/7,
      7.5/1/8, 8.5/1/9%
    }{%
      \node[circle, draw, fill=white,%
      inner sep=2pt, minimum width=14pt] at (\x,\y)(\i){\i}; }
    \begin{scope}[black]%
      \foreach \i/\j in { 1/2, 2/3, 3/4, 4/5, 5/6, 6/7, 7/8, 8/9%
      }{ \draw (\i)--(\j); }
    \end{scope}%
  \end{tikzpicture}
\end{center}\vspace*{-5mm}
The underlying independence system has
$E=\{1,2,3,4,5,6,7,8,9\}$ and the stable sets form the set-system
$\CI$.
The maximum cardinality basis of $\CI$ is the set $\{1,3,5,7,9\}$ of
cardinality $5$. An example of a small  basis of $E$ would be
$\{2,5,8\}$ of size $3$. For the Korte-Hausmann bound, consider the
set $\{1,2,3\}$ that demonstrates $q\leq \frac12$. 

Suppose $\CO$ is the collection of stable sets $S$ of size at most $4$
with $|S\cap \{3,6,9\}|\leq 1$. Notice, $\{1,3,5,7\} \in \CO$.

Consider the set $\{1,2,3\}$.  Assuming only subsets of elements of $\CO$, are independent, this set has a rank quotient of $1/2$, i.e., $q(\COd) = 1/2$. Hence, the greedy algorithm applied to the independence system consisting of the elements of $\COd$ will have a worst case bound of at most $1/2$.

The following matroid $\CM=U^1_{\{1,2\}}\oplus U^1_{\{4,5\}}\oplus
U^1_{\{7,8\}}$
is contained in $\CI$ since every independent set in $\CM$ is a stable
set from $\CI$. It is straightforward to verify that 
$\rho^M(\CI,\CO) = 3/4 > 1/2$. Interestingly, the largest set of $\CI$
is not independent in the matroid $\CM$. 
\end{example}

This example highlights the essential difference between Milgrom's heuristic and the greedy algorithm. The first is looking for a basis that has large overlap with each set in $\CO$. The greedy algorithm seeks a basis that has a large overlap with \emph{every} set that is contained in $\COd$.

\section{Discussion}
Our reading of \cite{milgrom-2017} suggests that Milgrom proposed the existence of a `good' inner matroid as an explanation for the success of the greedy algorithm in some settings. He writes:
\begin{quote}
``In practice, procedures based on greedy algorithms often perform very well.''
\end{quote}
Our analysis shows that existence of a good inner matroid by
itself is not sufficient to explain the practical observation.
It is possible for there to be no good inner matroid approximation,
yet the greedy algorithm performs well. Further, there are examples
where the inner matroid approximation will dominate that of the greedy
algorithm.

Certainly, the prior knowledge encoded in $\CO$ is beneficial.  
The idea of incorporating prior information about the optimal solution into an optimization problem is not new. There are three approaches we are aware of: probabilistic, uncertainty sets and stability. The first encodes the prior information in terms of a probability distribution over the possible objective value coefficients. (This is sometimes relaxed to a class of distributions sharing common moments.) The second, assumes that the objective coefficients are drawn from some set given a-priori. Stability assumes that the optimal solution of the instance under examination does not change under perturbations to the objective function (see \cite{bilulinial}).
The novelty of Milgrom's proposal is to encode the prior information  in a description of the set of possible optima.

However, as Theorem~\ref{thm:smart_zero_knowledge} makes clear, even in the absence of prior knowledge the inner matroid approximation can outperform greedy. This is because the inner matroid approximation exploits \emph{global} information about the structure of the independence system. Roughly speaking, the inner matroid approximation `kills' off low rank bases so as to prevent greedy from being stuck there. If the optimal basis is not to be found among these low rank sets, no harm is done. If not, precisely because they are low rank, one can approximate the optimal objective function value well using only a subset of the elements of these low rank sets.

\subsection*{Acknowledgements}
The authors want to thank a referee for helpful comments and finding a
gap in a previous proof, that moved us to state and prove
Theorem~\ref{thm:smart_zero_knowledge}.

\bibliographystyle{sdv-plainnat_nop}

\newcommand{\urlprefix}{}%
{%
\small \bibliography{greedy}}

\end{document}